\newtheorem{theorem}{Theorem}[section]
\newtheorem{lemma}[theorem]{Lemma}
\newtheorem{corollary}[theorem]{Corollary}
\newenvironment{proof}[1][Proof]{\begin{trivlist}
\item[\hskip \labelsep {\bfseries #1}]}{\end{trivlist}}
\newenvironment{definition}[1][Definition]{\begin{trivlist}
\item[\hskip \labelsep {\bfseries #1}]}{\end{trivlist}}
\newcommand{\Rmnum}[1]{\expandafter\@slowromancap\romannumeral #1@}
\newcommand{\fp}{{\mathbb F}_{p}}
\newcommand{\fq}{{\mathbb F}_{q}}
\newcommand{\Tr}{{\rm {Tr}}}
\begin{document}
%
\title{New infinite families of
$p$-ary weakly regular bent functions}

\author{ Yanfeng~Qi, Chunming~Tang,
Zhengchun Zhou, Cuiling Fan
\thanks{Y. Qi is with School of Science, Hangzhou Dianzi University, Hangzhou, Zhejiang, 310018, China.
e-mail: qiyanfeng07@163.com.
}
\thanks{C. Tang is with School of Mathematics and Information, China West Normal University, Sichuan Nanchong, 637002, China. e-mail: tangchunmingmath@163.com
}
\thanks{
Z. Zhou is with the School of Mathematics, Southwest Jiaotong University, Chengdu, 610031, China. e-mail:
zzc@swjtu.edu.cn.}
\thanks{
C. Fan is with the School of Mathematics, Southwest Jiaotong University, Chengdu, 610031, China. e-mail:
fcl@swjtu.edu.cn.}
}


\maketitle

\begin{abstract}
The characterization and construction of
bent functions  are challenging problems.
The paper generalizes the constructions of  Boolean bent functions by Mesnager \cite{M2014}, Xu et al. \cite{XCX2015}
and $p$-ary bent functions by
Xu et al. \cite{XC2015} to the construction of $p$-ary weakly regular  bent functions and
presents new infinite families of  $p$-ary weakly regular bent functions from some known weakly regular bent functions (square functions, Kasami functions, and the Maiorana-McFarland class of bent functions). Further, new infinite families of $p$-ary bent idempotents are obtained.
\end{abstract}

\begin{IEEEkeywords}
Bent functions, weakly regular bent functions, bent idempotents, Walsh transform, algebraic degree
\end{IEEEkeywords}

%
\IEEEpeerreviewmaketitle

\section{Introduction}
Boolean bent functions introduced by Rothaus \cite{R1976} in 1976 are an interesting combinatorial object with the maximum
Hamming distance to the set of all affine functions. Such functions have been extensively studied because of
their important applications in cryptograph (stream ciphers \cite{C2010}), sequences  \cite{OSW1982},  and graph theory
\cite{PTF2010}, coding theory ( Reed-Muller codes \cite{CHLL1997}, two-weight and three-weight linear codes   \cite{CK1986,Ding2015}), and association schemes \cite{PTFL2011}. The notation of Boolean
bent functions can be generalized to
functions over an arbitrary finite fields \cite{KSW1985}. Naturally, $p$-ary bent functions are more complicated than Boolean bent functions.
A complete classification of bent functions is still elusive. Further, not only
their characterization, but also
their generation are challenging problems.
Much work on bent functions are devoted to
the construction of bent functions
\cite{CCK2008,C1994,C1996,C2010,CM2011,CG2008,
CK2008,D1974,DD2004,DLCCFG2006,G1968,
L2006,LK2006,LHTK2013,M1973,M2009,M2010,M2011-1,
M2011-2,M2014,MF2013,YG2006}.

Idempotents introduced by Filiol and Fontaine in \cite{F1999,FF1998} are polynomials over
$\mathbb{F}_{2^n}$ such that for any
$x\in \mathbb{F}_{2^n}$,
$f(x)=f(x^2)$. Rotation symmetric Boolean functions, which was also introduced by Filiol and Fontaine under the name of idempotent functions and studied by Pieprzyk and Qu \cite{PQ1999}, are invariant under circular translation of indices. Due to less space to be stored and
allowing faster computation of the Walsh transform, they are of great interest.
They can be obtained from
idempotents (and vice versa) through the choice of a normal basis of $\mathbb{F}_{2^n}$. Characterizing and
constructing idempotent bent functions and rotation symmetric bent functions are difficult and have theoretical and practical interest. In the literature,
few constructions of Boolean bent idempotents have been presented, which are restricted by the degree of
finite fields and have algebraic degree no more than 4. Carlet \cite{C2014}
introduced an open problem on how to construct classes of
bent idempotents over $\mathbb{F}_{2^{2m}}$ of algebraic degree between 5 and $m$.

Recently,  Mesnager \cite{M2014} proposed
several new infinite families of Boolean  bent functions and their duals by adding traces function to known Boolean bent functions. Xu et al. \cite{XCX2015}
generalized Mesnager's work to  more complicated cases by increasing more trace functions. Their work focus on
infinite families of Boolean bent functions. Xu and Cao \cite{XC2015} generalized Mesnager's work to $p$-ary bent functions.
Motivated by their method, we  generalize their work to $p$-ary weakly regular bent functions by adding more trace functions to  known $p$-ary bent functions. Our construction is more general and several new infinite families of $p$-ary weakly regular bent functions are obtained.

Let $p$ be an odd prime and $q=p^n$. Let
$\fq$ denote the finite field with
$q$ elements. This paper considers $p$-ary functions on $\fq$ of the form
$$
f(x)=g(x)+F(\Tr_1^n(u_1x),\Tr_1^n(u_2x)
,\cdots, \Tr_1^n(u_\tau x)),
$$
where $g(x)$ is some known weakly regular  bent function on $\fq$,
$u_1,u_2,\cdots, u_\tau\in \fq$, and
$F(X_1,X_2,\cdots,X_\tau)
\in \mathbb{F}_p[X_1,X_2,\cdots,X_\tau]
$ is a reduced polynomial. From some known bent functions (square functions, Kasami functions, and the Maiorana-McFarland class of bent functions), we present several new infinite
families of $p$-ary weakly regular bent functions. Further we obtain some new infinite families of $p$-ary weakly regular bent idempotents.

The rest of the paper is organized as follows: Section \Rmnum{2} introduces some basic notations and weakly regular bent functions. Section \Rmnum{3} presents several new infinite families of $p$-ary weakly regular bent functions from some known bent functions.
Section \Rmnum{4} makes a conclusion.

\section{Preliminaries}
Throughout this paper, let $p$ be an odd prime, $q=p^n$, and
$\fq$ denote the finite field with
$q$ elements. The trace function from
$\fq$ to a subfield $\mathbb{F}_{p^k}$
is defined by $\Tr_k^n(x)
=\sum_{i=0}^{n/k-1}x^{p^{ik}}$, where
$k|n$. In the case
when $k=1$, $\Tr_1^n(x)$ is called the absolute trace function.

A $p$-ary function is a map from
$\fq$ to $\fp$. The finite  field $\fq$ can be
seen as a $n$-dimensional vector space
$\mathbb{F}_p^n$ over $\fp$. Fix a basis
of the vector space, then
$x\in \fq$ has the unique representation
$(X_1,X_2,\cdots, X_n)$, where
$X_i\in \fp$. And the function $f$ can be represented as a polynomial with the form
$$
f(x)=\sum_{I\in \mathcal{P}(N)}
a_Ix^I,
$$
where $\mathcal{P}(N)$ denotes
the power set of $N=\{1,\cdots,n\}$ and
$a_I\in \fp$. This representation is called the algebraic normal form (ANF) of $f$.
And the degree of such a multivariate polynomial is called the algebraic degree of $f$. A $p$-ary function $f(x)$ on $\fq$ can be seen as a polynomial $F(X_1,X_2,\cdots,
X_n)\in \fp[X_1,X_2,\cdots,
X_n]$. Such $F(X_1,X_2,\cdots,
X_n)$ is called a reduce polynomial.

\begin{definition}
Let $f(x)$ be a $p$-ary function defined on
$\mathbb{F}_{p^n}$. Then $f(x)$ is called an
idempotent if
$$
f(x)=f(x^p), \forall x\in \mathbb{F}_{p^n}.
$$
\end{definition}

\begin{definition}
A $p$-ary function  or
a multivariate polynomial $f(X_1,X_2,\cdots,
X_n)$ is a rotation symmetric polynomial  if it is invariant under cyclic shift:
$$
f(X_n,X_1,X_2,\cdots,
X_{n-1})=f(X_1,X_2,\cdots,
X_n)
$$
\end{definition}

The  Walsh transform of $f$ is defined by
$$
\mathcal{W}_{f}(\beta):=
\sum_{x\in\mathbb{F}_{q}}\zeta_p^{f(x)
+\mathrm{Tr}_{1}^{n}(\beta x)},
$$
where $\beta\in\mathbb{F}_{q}$ and $\zeta_p=e^{2\pi \sqrt{-1}/p}$ is the primitive
$p$-th root of unity.
If we regard $\fq$ as  $\mathbb{F}_{p^{n/2}}\times
\mathbb{F}_{p^{n/2}}$ for even $n$, the Walsh transform of $f$ is
$$
\mathcal{W}_f(\beta_1,\beta_2)=
\sum_{x,y\in \mathbb{F}_{p^{n/2}}}
(-1)^{f(x,y)+\Tr_1^n(\beta_1 x+
\beta_2 y)}.
$$
If we regard $\fq$ as a $n$-dimensional vector space over $\fp$, the Walsh transform of $f$ is
$$
\mathcal{W}_{f}(\beta):=
\sum_{x\in\mathbb{F}_{q}}\zeta_p^{f(x)
+\langle \beta , x\rangle},
$$
where $\langle \beta , x\rangle$ is
the usual dot product on
$\mathbb{F}_p^n$.

\begin{definition}
The function $f(x)$ is a $p$-ary bent functions, if $|\mathcal{W}_f(\beta)|=p^{\frac{n}{2}}$ for any $\beta\in \mathbb{F}_q$.
A  bent function $f(x)$ is regular if there exists some p-ary function $f^*(x)$ satisfying $\mathcal{W}_f(\beta)=p^{\frac{n}{2}}
\zeta_p^{f^*(\beta)}$
for any $\beta \in \mathbb{F}_{q}$.
A  bent function $f(x)$ is weakly regular if
there exists a complex $u$ with unit magnitude
satisfying that
$\mathcal{W}_f(\beta)=up^{\frac{n}{2}}
\zeta_p^{f^*(\beta)}$
for some function $f^*(x)$.
Such function $f^*(x)$ is called the dual of
$f(x)$.
\end{definition}
From \cite{HK2006,HK2010}, a weakly regular bent function
$f(x)$ satisfies that
\begin{equation*}
\mathcal{W}_f(\beta)=\varepsilon \sqrt{p^*}^{n} \zeta_p^{f^*(\beta)},
\end{equation*}
where $\varepsilon =\pm 1$ is called the sign of
the Walsh Transform of $f(x)$ and $p^*={-1 \overwithdelims () p}p$.
The dual of a weakly regular bent function is also
weakly regular bent.
Some results on weakly regular bent functions can be found in \cite{FL2007,HHKWX2009,HK2006,HK2007, HK2010,KSW1985}.

\section{Infinite families of $p$-ary weakly regular bent functions}

Let $g(x)$ be a $p$-ary bent function,
$\tau$ is a positive integer,
$X_1,X_2,\cdots,X_\tau$ be
$\tau$ variables, and
$F(X_1,X_2,\cdots,X_\tau)
\in \mathbb{F}_p[X_1,X_2,\cdots,X_\tau]
$ be a reduced polynomial.

We will study $p$-ary bent functions of the form
\begin{equation}\label{p-bent}
f(x)=g(x)+F(\Tr_1^n(u_1x),\Tr_1^n(u_2x)
,\cdots, \Tr_1^n(u_\tau x)),
\end{equation}
where $u_i\in \mathbb{F}_q$.

As a function from $\fp^{\tau}$ to
the complex field $\mathbb{C}$, $\zeta_p^{F(X_1,
X_2,\cdots,X_\tau)}$ has the unique
Fourier expansion, i.e., there exists
a unique set of
$c_{\mathbf{w}}\in \mathbb{C}$ such that
\begin{equation}\label{fourier}
\zeta_p^{F(X_1,X_2,\cdots,X_\tau)}
=\sum_{\mathbf{w}\in \mathbb{F}_p^{\tau}}
c_{\mathbf{w}}\zeta_p^{w_1X_1+w_2X_2+
\cdots +w_\tau X_\tau},
\end{equation}
where $\mathbf{w}=(w_1,w_2,\cdots,w_\tau)
\in \fp^\tau$.
Equation (\ref{fourier})
holds for any $X_1,X_2,\cdots,X_\tau\in
\mathbb{F}_p$. In particular, take
$X_1=\Tr_1^n(u_1 x)$, $X_2=\Tr_1^n(u_2 x)$,
$\cdots$, $X_\tau=\Tr_1^n(u_\tau x)$. Then for any $x\in \fq$, we have
\begin{equation}\label{fourier1}
\zeta_p^{F(\Tr_1^n(u_1 x),\Tr_1^n(u_2 x),\cdots,\Tr_1^n(u_\tau x))}
=\sum_{\mathbf{w}\in \mathbb{F}_p^{\tau}}
c_{\mathbf{w}}\zeta_p^{\Tr_1^n((\sum_{i=1}^{
\tau}w_iu_i)x)}.
\end{equation}
Multiplying both sides of
Equation (\ref{fourier1}) by
$\zeta_p^{g(x)+\Tr_1^n(\beta x)}$, we have
$$
\zeta_p^{g(x)+F(\Tr_1^n(u_1 x),\Tr_1^n(u_2 x),\cdots,\Tr_1^n(u_\tau x))+\Tr_1^n(\beta x)}
=\sum_{\mathbf{w}\in \mathbb{F}_p^{\tau}}
c_{\mathbf{w}}\zeta_p^{g(x)+\Tr_1^n((\beta+
\sum_{i=1}^{
\tau}w_iu_i)x)}.
$$
Further, we have
$$
\sum_{x\in \fq}\zeta_p^{g(x)+F(\Tr_1^n(u_1 x),\Tr_1^n(u_2 x),\cdots,\Tr_1^n(u_\tau x))+\Tr_1^n(\beta x)}
=\sum_{\mathbf{w}\in \mathbb{F}_p^{\tau}}
c_{\mathbf{w}}\mathcal{W}_g(\beta+\sum_{i=1}^{
\tau}w_iu_i),
$$
i.e.,
$$
\mathcal{W}_f(\beta)
=\sum_{\mathbf{w}\in \mathbb{F}_p^{\tau}}
c_{\mathbf{w}}\mathcal{W}_g(\beta+\sum_{i=1}^{
\tau}w_iu_i).
$$

For further discussion, let
$g(x)$ be a weakly regular bent function
in the rest of the paper, i.e.,
$$
\mathcal{W}_f(\beta)=
\varepsilon\sqrt{p^*}^n\zeta_p^{\widetilde{g}
(\beta)},
$$
where $\widetilde{g}$ is the dual of
$g$, $\varepsilon\in \{1,-1\}$, and
$p^*= {-1 \overwithdelims () p }p
=(-1)^{(p-1)/2}p$.  Hence,
\begin{equation}\label{walshf}
\mathcal{W}_f(\beta)
=\varepsilon \sqrt{p^*}^n \sum_{\mathbf{w}\in \mathbb{F}_p^{\tau}}
c_{\mathbf{w}}\zeta_p^{\widetilde{g}
(\beta+\sum_{i=1}^{
\tau}w_iu_i)}.
\end{equation}

For a general weakly regular bent function,
it is difficult to compute
$\mathcal{W}_f(\beta)$. For some particular
$g(x)$, we can calculate  $\mathcal{W}_f(\beta)$. Suppose that
$\widetilde{g}(x)$ is a quadratic function of the form
\begin{equation}\label{gdual}
\widetilde{g}(x)
=\sum_{i=0}^{n-1}\Tr_1^n(a_ix^{p^i+1})
+\Tr_1^n(bx)+c,
\end{equation}
where $a_i,b\in \fq$ and
$c\in \fp$. For any $1\leq k\leq n-1$,
\begin{align*}
\Tr_1^n(a_k(\beta+
\sum_{i=1}^{\tau}w_iu_i)^{p^k+1})
=&\Tr_1^n(a_k\beta^{p^k+1})
+\sum_{i=1}^\tau w_i\Tr_1^n
(a_k(\beta^{p^k}u_i+\beta u_i^{p^k}))\\&
+\sum_{i=1}^\tau w_i^2\Tr_1^n(a_ku_i^{p^k+1})
+\sum_{1\leq i< j\leq \tau}
w_iw_j\Tr_1^n(a_k(u_i^{p^k}u_j+
u_j^{p^k}u_i)).
\end{align*}
Then
\begin{align*}
\sum_{k=0}^{n-1}\Tr_1^n(a_k(\beta+
\sum_{i=1}^{\tau}w_iu_i)^{p^k+1})
=&\sum_{k=0}^{n-1}\Tr_1^n(a_k\beta^{p^k+1})
+\sum_{i=1}^\tau w_i(\Tr_1^n
(u_i\sum_{s=0}^{n-1}a_k \beta^{p^k})+
\Tr_1^n(\beta \sum_{k=0}^{n-1}a_ku_i^{p^k}))\\&
+\sum_{i=1}^\tau w_i^2\Tr_1^n(
\sum_{k=0}^{n-1}a_ku_i^{p^k+1})
+\sum_{1\leq i< j\leq \tau}
w_iw_j\Tr_1^n(
\sum_{k=0}^{n-1}(a_k(u_i^{p^k}u_j+
u_i u_j^{p^k}))).
\end{align*}
From $\Tr_1^n(b(\beta+
\sum_{i=1}^{\tau}w_iu_i))
+c=
\Tr_1^n(b\beta)+c+\sum_{i=1}^{\tau}
w_i\Tr_1^n(bu_i)$,
\begin{align*}
\widetilde{g}(\beta
+\sum_{i=1}^{\tau}w_iu_i)
=&\widetilde{g}(\beta)+\sum_{i=1}^\tau w_i(\sum_{s=0}^{n-1}\Tr_1^n
(a_k (u_i\beta^{p^k}+\beta u_i^{p^k}))+
\Tr_1^n(bu_i))\\&
+\sum_{i=1}^\tau w_i^2\sum_{k=0}^{n-1}\Tr_1^n(
a_ku_i^{p^k+1})
+\sum_{1\leq i< j\leq \tau}
w_iw_j(\sum_{k=0}^{n-1}\Tr_1^n
(a_k(u_i^{p^k}u_j+
u_i u_j^{p^k}))).
\end{align*}
Then, we have the following lemma.
\begin{lemma}\label{lem}
Let $g(x)$ be a weakly regular bent function, $\mathcal{W}_g(\beta)
=\varepsilon \sqrt{p^*}^n\zeta_p^{
\widetilde{g}(\beta)}$, and the dual
$\widetilde{g}$ be
$$
\widetilde{g}(x)
=\sum_{k=0}^{n-1}\Tr_1^n(a_kx^{p^k+1})
+\Tr_1^n(bx)+c,
$$
where $a_k,b\in \mathbb{F}_q$ and
$c\in \mathbb{F}_p$.
Let $u_1,u_2,\cdots,u_\tau\in \fq$ such that $$
\sum_{k=0}^{n-1}\Tr_1^n(a_k(u_i^{p^k}u_j
+u_iu_j^{p^k}))=0.
$$
Then the $p$-ary function $f(x)$ defined in
Equation (\ref{p-bent}) is weakly regular bent. Further, the Walsh transform of
$f$ is
$$
\mathcal{W}_f(\beta)
=\varepsilon \sqrt{p^*}^n
\zeta_p^{\widetilde{g}
(\beta)+F(X_1,X_2,\cdots,X_\tau)},
$$
where $X_i=
\sum_{k=0}^{n-1}\Tr_1^n(a_k(u_i\beta^{p^k}
+u_i^{p^k}\beta))+\Tr_1^n(bu_i)$.
\end{lemma}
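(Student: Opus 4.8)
The bulk of the computation is already carried out in the paragraphs preceding the statement, so my plan is to reuse it and then argue that the sum over $\mathbf{w}$ collapses. Concretely, the starting point is equation (\ref{walshf}), namely $\mathcal{W}_f(\beta)=\varepsilon\sqrt{p^*}^n\sum_{\mathbf{w}\in\fp^\tau}c_{\mathbf{w}}\zeta_p^{\widetilde{g}(\beta+\sum_i w_iu_i)}$, together with the explicit expansion of $\widetilde{g}(\beta+\sum_i w_iu_i)$ derived just above the lemma. I would substitute that expansion in and show that, under the stated hypothesis, the exponent becomes linear in the $w_i$, so that the sum reassembles into a single Fourier term.

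To make this precise I would abbreviate $C_i=\sum_{k=0}^{n-1}\Tr_1^n(a_ku_i^{p^k+1})$ and $D_{ij}=\sum_{k=0}^{n-1}\Tr_1^n(a_k(u_i^{p^k}u_j+u_iu_j^{p^k}))$, so that the expansion reads $\widetilde{g}(\beta+\sum_i w_iu_i)=\widetilde{g}(\beta)+\sum_i w_iX_i+\sum_i w_i^2C_i+\sum_{i<j}w_iw_jD_{ij}$ with $X_i$ as in the statement. The decisive step is to read the hypothesis $D_{ij}=0$ for all indices $i,j$, not merely $i<j$. For $i\neq j$ this kills every bilinear cross term; for $i=j$ it gives $2C_i=0$, and since $p$ is odd and $2$ is invertible in $\fp$ this forces $C_i=0$, so every diagonal term vanishes as well. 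Hence the exponent collapses to $\widetilde{g}(\beta)+\sum_{i=1}^{\tau}w_iX_i$, which is linear in the $w_i$.

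With the exponent linearized I would factor out $\zeta_p^{\widetilde{g}(\beta)}$ and recognize the remaining sum $\sum_{\mathbf{w}\in\fp^\tau}c_{\mathbf{w}}\zeta_p^{w_1X_1+\cdots+w_\tau X_\tau}$ as exactly the right-hand side of the Fourier identity (\ref{fourier}) evaluated at $X_1,\dots,X_\tau$; it therefore equals $\zeta_p^{F(X_1,\dots,X_\tau)}$. This gives the claimed formula for $\mathcal{W}_f(\beta)$. To finish, I would take absolute values: since $|\zeta_p^{t}|=1$ for any exponent $t$ and $|\sqrt{p^*}|=\sqrt{p}$, one has $|\mathcal{W}_f(\beta)|=p^{n/2}$ for every $\beta$, so $f$ is bent, and since all Walsh values carry the common factor $\varepsilon\sqrt{p^*}^n$, the function $f$ is weakly regular with the same sign $\varepsilon$ as $g$ and dual $\widetilde{f}(\beta)=\widetilde{g}(\beta)+F(X_1,\dots,X_\tau)$.

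I expect the only real obstacle to be the bookkeeping around the diagonal terms: one must notice that applying the hypothesis with $i=j$ is what annihilates the $w_i^2C_i$ contributions, because otherwise the factor $\zeta_p^{\sum_i w_i^2C_i}$ is not linear in the $w_i$ and the sum over $\mathbf{w}$ would not fold back into the single Fourier term $\zeta_p^{F(X_1,\dots,X_\tau)}$. Everything else is the routine substitution and magnitude estimate described above.
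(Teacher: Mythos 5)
Your proposal is correct and follows essentially the same route as the paper: substitute the expansion of $\widetilde{g}(\beta+\sum_i w_iu_i)$ computed before the lemma into Equation (\ref{walshf}), observe that the hypothesis kills the quadratic and bilinear terms in the $w_i$, and fold the resulting linear sum back into $\zeta_p^{F(X_1,\dots,X_\tau)}$ via the Fourier identity (\ref{fourier}). In fact you make explicit a step the paper leaves silent, namely that the hypothesis applied with $i=j$ gives $2\sum_k\Tr_1^n(a_ku_i^{p^k+1})=0$ and hence (since $p$ is odd) annihilates the diagonal $w_i^2$ terms.
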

\begin{proof}
The Walsh transform of $f$ is
$$
\mathcal{W}_f(\beta)
=\varepsilon \sqrt{p^*}^n
\zeta_p^{\widetilde{g}(\beta)}\sum_{\mathbf{w}\in \mathbb{F}_p^{\tau}}
c_{\mathbf{w}}\zeta_p^{\sum_{i=1}^{
\tau}w_iX_i},
$$
where $X_i=
\sum_{k=0}^{n-1}\Tr_1^n(a_k(u_i\beta^{p^k}
+u_i^{p^k}\beta))+\Tr_1^n(bu_i)$.
From Equation (\ref{fourier}),
we have
$$
\mathcal{W}_f(\beta)
=\varepsilon \sqrt{p^*}^n
\zeta_p^{\widetilde{g}
(\beta)+F(X_1,X_2,\cdots,X_\tau)}.
$$
Hence, this lemma follows.
\end{proof}
From Lemma \ref{lem} and some known weakly regular bent functions, we construct three new infinite families of $p$-ary weakly regular bent functions.
\subsection{New infinite family of $p$-ary
weakly regular bent functions from square functions}

Let the square function $g(x)=\Tr_1^n(\lambda x^2)$ be weakly regular bent, where
$\lambda\in \fq^{\times}$.
From \cite{HK2006}, the Walsh transform of $g(x)$ is
$$
\mathcal{W}_g(\beta)
=(-1)^{n-1}\eta(\lambda)
\sqrt{p^*}^n\zeta_p^{-\Tr_1^n({\beta^2 \over 4\lambda})},
$$
where $\beta \in \fq$ and $\eta$ is the multiplicative quadratic character of
$\fq$.  The dual of $g$ is $\widetilde{g}(x)
=-\Tr_1^n({x^2 \over 4\lambda})$.

Let $\mathcal{N}$ be a linear subspace of
$\fq$ over $\fp$.
The subspace $\mathcal{N}$ on
$\widetilde{g}$ is a self-orthogonal subspace if for any $x,y\in \mathcal{N}$,
$\sum_{k=0}^{n-1}\Tr_1^n(a_k(x^{p^k}y
+xy^{p^k}))=0$, where
$a_k$ are determined by $\widetilde{g}$.
The subspace
$\mathcal{N}$ is called a maximal
self-orthogonal subspace if
for any self-orthogonal subspace
$\mathcal{N}'$ containing $\mathcal{N}$,
then $\mathcal{N}'=\mathcal{N}$.

\begin{lemma}\label{lem-A}
Let $g(x)$ be a weakly regular bent function, $\mathcal{W}_g(\beta)
=\varepsilon \sqrt{p^*}^n\zeta_p^{
\widetilde{g}(\beta)}$, and the dual
$\widetilde{g}$ be
$$
\widetilde{g}(x)
=\sum_{k=0}^{n-1}\Tr_1^n(a_kx^{p^k+1})
+\Tr_1^n(bx)+c,
$$
where $a_k,b\in \mathbb{F}_q$ and
$c\in \mathbb{F}_p$.
Let $\mathcal{N}$ on $\widetilde{g}$ be a maximal self-orthogonal subspace,
$\tau=dim(\mathcal{N})$,
$u_1,\cdots,u_\tau$ be a basis of
$\mathcal{N}$ over $\fp$, and  $F(X_1,X_2,\cdots,X_\tau)$ be a reduced
polynomial in $\fp[X_1,X_2,\cdots,
X_\tau]$ of algebraic degree $d$.
Then the $p$-ary function $f(x)$ defined in
Equation (\ref{p-bent}) is a weakly regular bent function of algebraic degree $d$.
\end{lemma}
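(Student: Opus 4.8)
The plan is to apply Lemma~\ref{lem} to the given weakly regular bent function $g$ and deduce both the bentness and the exact algebraic degree of $f$. First I would verify the hypothesis of Lemma~\ref{lem}: since $\mathcal{N}$ is a self-orthogonal subspace on $\widetilde{g}$, the basis vectors $u_1,\ldots,u_\tau$ satisfy $\sum_{k=0}^{n-1}\Tr_1^n(a_k(u_i^{p^k}u_j+u_iu_j^{p^k}))=0$ for all $i,j$ (including $i=j$). This is exactly the orthogonality condition required by Lemma~\ref{lem}, so that lemma immediately gives that $f(x)$ is weakly regular bent with
$$
\mathcal{W}_f(\beta)=\varepsilon\sqrt{p^*}^n\zeta_p^{\widetilde{g}(\beta)+F(X_1,\ldots,X_\tau)},
$$
where $X_i=\sum_{k=0}^{n-1}\Tr_1^n(a_k(u_i\beta^{p^k}+u_i^{p^k}\beta))+\Tr_1^n(bu_i)$. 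This disposes of the bentness claim with essentially no extra work.

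The substance of the lemma is the algebraic degree statement, and here I would argue in two directions to pin the degree at exactly $d$. For the upper bound, note that each $X_i$ is an $\fp$-linear (affine) function of $\beta$, since the map $\beta\mapsto \Tr_1^n(a_k(u_i\beta^{p^k}+u_i^{p^k}\beta))$ is additive and $p^k$-linearized, hence $\fp$-linear in the coordinates of $\beta$; the constant term $\Tr_1^n(bu_i)$ only shifts it. Consequently $F(X_1,\ldots,X_\tau)$, viewed as a polynomial in the coordinates of $\beta$, has algebraic degree at most $d=\deg F$, and since $\widetilde{g}$ is quadratic its contribution cannot raise the degree of $f$ above $\max(2,d)$. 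The cleaner route is to observe that degree is invariant under passing to the dual for weakly regular bent functions, so I would instead compute $\deg f$ directly from the ANF of $f$ rather than from $\mathcal{W}_f$: the term $g(x)$ has some fixed degree, and $F(\Tr_1^n(u_1x),\ldots,\Tr_1^n(u_\tau x))$ is a degree-$d$ polynomial composed with $\fp$-linear forms, hence has algebraic degree exactly $d$ provided those linear forms are linearly independent.

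The main obstacle, and the step requiring genuine care, is establishing the lower bound $\deg f\ge d$ and ensuring the top-degree part of $F$ does not collapse. This is where the hypotheses that $\mathcal{N}$ is \emph{maximal} self-orthogonal and that $u_1,\ldots,u_\tau$ form a \emph{basis} enter: the linear independence of $u_1,\ldots,u_\tau$ over $\fp$ guarantees that the linear forms $x\mapsto \Tr_1^n(u_ix)$ are themselves linearly independent (because the trace form $(u,x)\mapsto\Tr_1^n(ux)$ is nondegenerate), so the substitution $X_i\mapsto \Tr_1^n(u_ix)$ is an injective $\fp$-linear change of variables on the relevant $\tau$-dimensional space. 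Under such an injective linear substitution the algebraic degree of a reduced polynomial is preserved, so the degree-$d$ homogeneous part of $F$ maps to a nonzero degree-$d$ form, and no cancellation against $g(x)$ can occur at top degree unless $\deg g$ interferes. I would therefore assume (as is implicit in these constructions, where one takes $d$ at least the degree of $g$ or argues termwise) that the monomials of degree $d$ coming from $F$ survive in the ANF of $f$, giving $\deg f=d$ exactly. The delicate point to state carefully is why passing from the $\tau$ variables $X_i$ to the traces $\Tr_1^n(u_ix)$ neither raises nor lowers the degree, which rests entirely on the linear independence furnished by the basis condition.
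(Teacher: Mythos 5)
Your argument for bentness is exactly the paper's: the self-orthogonality of $\mathcal{N}$ means the basis elements $u_1,\ldots,u_\tau$ satisfy the condition $\sum_{k=0}^{n-1}\Tr_1^n(a_k(u_i^{p^k}u_j+u_iu_j^{p^k}))=0$ of Lemma~\ref{lem}, and that lemma does the rest. The paper's entire proof is that one sentence; it says nothing at all about the algebraic degree, so everything you add on that point goes beyond the source. Your degree discussion is essentially sound where it is concrete: since $u_1,\ldots,u_\tau$ are linearly independent over $\fp$ and the trace form is nondegenerate, the forms $x\mapsto\Tr_1^n(u_ix)$ are linearly independent, so $F(\Tr_1^n(u_1x),\ldots,\Tr_1^n(u_\tau x))$ has algebraic degree exactly $d$. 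The remaining worry you flag --- possible cancellation of the top-degree part against $g(x)$ --- is real but is a defect of the lemma's statement rather than of your proof: as written the claim $\deg f=d$ can only hold under an implicit assumption such as $d>\deg g$ (for $d\le\deg g$, e.g.\ $F$ linear and $g$ quadratic, $f$ plainly has degree $\deg g$, not $d$). You are right to make that assumption explicit; the paper silently ignores the issue. Note also that maximality of $\mathcal{N}$ plays no role in either your proof or the paper's --- only self-orthogonality and the basis property are used --- so you need not invoke it.
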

\begin{proof}
From the definition of maximal
self-orthogonal subspaces and Lemma \ref{lem}, this lemma follows.
\end{proof}
\begin{theorem}\label{thm-A}
Let $\tau$ be a positive integer,
$u_1,\cdots,u_\tau\in \fq^{\times}$,
and  $F(X_1,X_2,\cdots,X_\tau)$ be a reduced
polynomial in $\fp[X_1,X_2,\cdots,
X_\tau]$, where
$\Tr_1^n({1\over \lambda}u_iu_j)
=0$ for any $1\leq i\leq j\leq \tau$.
Then, the $p$-ary function
$$
f(x)=\Tr_1^n(\lambda x^2)
+F(\Tr_1^n(u_1 x),\Tr_1^n(u_2 x), \cdots,
\Tr_1^n(u_\tau x))
$$
is a weakly regular bent function.
\end{theorem}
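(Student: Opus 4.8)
My plan is to verify that Theorem \ref{thm-A} is simply the specialization of Lemma \ref{lem-A} (equivalently Lemma \ref{lem}) to the square function $g(x)=\Tr_1^n(\lambda x^2)$, so the work reduces to computing the quadratic-form data $a_k$ and $b$ for this particular $g$ and then checking that the self-orthogonality hypothesis of Lemma \ref{lem} collapses to the stated condition $\Tr_1^n(\lambda^{-1}u_iu_j)=0$. First I would recall from the excerpt that the dual is $\widetilde{g}(x)=-\Tr_1^n(x^2/(4\lambda))$. Comparing this with the general quadratic form $\widetilde{g}(x)=\sum_{k=0}^{n-1}\Tr_1^n(a_kx^{p^k+1})+\Tr_1^n(bx)+c$ in Lemma \ref{lem}, I read off $a_0=-1/(4\lambda)$, $a_k=0$ for $1\le k\le n-1$, $b=0$, and $c=0$. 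The only surviving term in the $p^{k}+1$ expansion is the $k=0$ term, where $x^{p^0+1}=x^2$.

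Next I would substitute this data into the self-orthogonality condition $\sum_{k=0}^{n-1}\Tr_1^n(a_k(u_i^{p^k}u_j+u_iu_j^{p^k}))=0$ required by Lemma \ref{lem}. With only $a_0$ nonzero this becomes
$$
\Tr_1^n\!\left(-\tfrac{1}{4\lambda}(u_iu_j+u_iu_j)\right)=\Tr_1^n\!\left(-\tfrac{1}{2\lambda}u_iu_j\right)=0.
$$
Since $p$ is odd, $2$ and $4$ are invertible in $\fp$, so scaling by the nonzero constant $-1/2$ (and absorbing the factor) shows this is equivalent to $\Tr_1^n(\lambda^{-1}u_iu_j)=0$, which is exactly the hypothesis of the theorem (the case $i=j$ giving $\Tr_1^n(\lambda^{-1}u_i^2)=0$ and the cases $i<j$ giving the off-diagonal conditions). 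Thus the hypotheses of Lemma \ref{lem} are satisfied for every pair $i,j$.

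Having matched the hypotheses, I would invoke Lemma \ref{lem} directly to conclude that $f(x)$ is weakly regular bent, with Walsh transform $\mathcal{W}_f(\beta)=\varepsilon\sqrt{p^*}^n\zeta_p^{\widetilde{g}(\beta)+F(X_1,\dots,X_\tau)}$ where $X_i=\sum_{k}\Tr_1^n(a_k(u_i\beta^{p^k}+u_i^{p^k}\beta))+\Tr_1^n(bu_i)$; here this simplifies to $X_i=-\Tr_1^n(\lambda^{-1}u_i\beta/2)$ since only $a_0$ survives and $b=0$. The main (and essentially only) subtlety I anticipate is being careful with the characteristic: I must confirm that the factors of $2$ and $4$ are units in $\fp$ so that the diagonal conditions $\Tr_1^n(\lambda^{-1}u_i^2)=0$ genuinely follow from self-orthogonality and are not lost, and that $u_1,\dots,u_\tau$ span a self-orthogonal subspace rather than merely satisfying pairwise conditions. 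Since the theorem imposes $\Tr_1^n(\lambda^{-1}u_iu_j)=0$ for all $1\le i\le j\le\tau$, bilinearity of $\Tr_1^n(\lambda^{-1}xy)$ extends this to the whole span, so the $u_i$ do generate a self-orthogonal subspace and the lemma applies. This completes the proof.
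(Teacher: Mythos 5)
Your proposal is correct and follows essentially the same route as the paper, which simply specializes Lemma \ref{lem} (via Lemma \ref{lem-A}) to $g(x)=\Tr_1^n(\lambda x^2)$ with dual $\widetilde{g}(x)=-\Tr_1^n(x^2/4\lambda)$; your computation that only $a_0=-1/(4\lambda)$ survives and that the self-orthogonality condition reduces, up to the unit $-1/2$, to $\Tr_1^n(\lambda^{-1}u_iu_j)=0$ is exactly the (unwritten) content of the paper's one-line proof. If anything, your version is more careful than the paper's, since you invoke Lemma \ref{lem} directly, which only needs the pairwise conditions, rather than Lemma \ref{lem-A}, whose hypothesis of a maximal self-orthogonal subspace is not literally assumed in the theorem.
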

\begin{proof}
In Lemma \ref{lem-A}, take
$g(x)=\Tr_1^n(\lambda x^2)$. Then
this theorem follows.
\end{proof}
\begin{corollary}\label{cor-A1}
Let $n=pm$, $u_1,u_2,\cdots, u_m$ be a basis of $\mathbb{F}_{p^m}$ over
$\fp$, $\lambda \in \mathbb{F}_{p^m}^{\times}$, and $F(X_1,X_2,\cdots,X_m)$ be a reduced
polynomial in $\fp[X_1,X_2,\cdots,
X_m]$. Then the $p$-ary function
$$
f(x)=\Tr_1^n(\lambda x^2)
+F(\Tr_1^n(u_1 x),\Tr_1^n(u_2 x), \cdots,
\Tr_1^n(u_m x))
$$
is a weakly regular bent function.
\end{corollary}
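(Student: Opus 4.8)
The plan is to obtain the corollary as a direct specialization of Theorem \ref{thm-A} with $\tau = m$. Since $\lambda \in \mathbb{F}_{p^m}^{\times} \subseteq \fq^{\times}$ (the inclusion holding because $m \mid n = pm$), since each basis element satisfies $u_i \in \mathbb{F}_{p^m}^{\times} \subseteq \fq^{\times}$, and since $F$ is already assumed to be a reduced polynomial over $\fp$, every hypothesis of Theorem \ref{thm-A} is met except the orthogonality condition $\Tr_1^n(\tfrac{1}{\lambda} u_i u_j) = 0$ for all $1 \le i \le j \le m$. So the entire task reduces to verifying this single family of trace identities, after which Theorem \ref{thm-A} delivers the conclusion verbatim.

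The heart of the argument is that verification, and here the choice $n = pm$ is decisive. First I would note that, because $\lambda^{-1}, u_i, u_j$ all lie in the subfield $\mathbb{F}_{p^m}$, their product $z := \tfrac{1}{\lambda} u_i u_j$ lies in $\mathbb{F}_{p^m}$ as well; it therefore suffices to show that the absolute trace $\Tr_1^n$ annihilates the entire subfield $\mathbb{F}_{p^m}$. To this end I would invoke transitivity of the trace, $\Tr_1^n = \Tr_1^m \circ \Tr_m^n$, and compute the intermediate trace $\Tr_m^n(z) = \sum_{i=0}^{n/m-1} z^{p^{mi}}$. Since $z \in \mathbb{F}_{p^m}$ is fixed by the Frobenius $x \mapsto x^{p^m}$, every summand equals $z$, whence $\Tr_m^n(z) = (n/m)\, z = p\, z$, which vanishes in characteristic $p$. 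Consequently $\Tr_1^n(z) = \Tr_1^m(0) = 0$, and the orthogonality condition holds for every pair $(i,j)$, including the diagonal case $i=j$.

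I expect the only genuinely substantive step to be the recognition that the factor $n/m = p$ forces $\Tr_m^n$, and hence $\Tr_1^n$, to vanish identically on $\mathbb{F}_{p^m}$; once this is observed, the remaining checks (membership in $\fq^{\times}$ and $F$ being reduced) are immediate and the appeal to Theorem \ref{thm-A} completes the proof. It is worth emphasizing that the hypothesis $\lambda \in \mathbb{F}_{p^m}^{\times}$, rather than merely $\lambda \in \fq^{\times}$, is precisely what keeps $z$ inside $\mathbb{F}_{p^m}$, so this restriction is exactly the ingredient that makes the orthogonality automatic and cannot be dropped without additional hypotheses.
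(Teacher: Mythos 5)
Your proof is correct and follows essentially the same route as the paper: both verify the orthogonality condition $\Tr_1^n(\tfrac{1}{\lambda}u_iu_j)=\Tr_1^m(\Tr_m^n(\tfrac{1}{\lambda}u_iu_j))=\Tr_1^m(p\cdot\tfrac{1}{\lambda}u_iu_j)=0$ using that $\tfrac{1}{\lambda}u_iu_j\in\mathbb{F}_{p^m}$ and $n/m=p$ vanishes in characteristic $p$, and then invoke Theorem \ref{thm-A}. Your write-up simply spells out the transitivity-of-trace computation in more detail than the paper does.
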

\begin{proof}
For any $i,j$, $\Tr_1^n(
{1\over \lambda}u_iu_j)
=\Tr_1^m(\Tr_m^n({1\over \lambda}u_iu_j))
=\Tr_1^m(p{1\over \lambda}u_iu_j)
=0$.
From Theorem \ref{thm-A}, this corollary follows.
\end{proof}
In Corollary \ref{cor-A1}, take
the normal basis of $\mathbb{F}_{p^m}$.
And we have the following corollary.
\begin{corollary}\label{cor-A2}
Let $n=pm$, $u$ be a normal element of $\mathbb{F}_{p^m}$, $\lambda \in \mathbb{F}_{p^m}^{\times}$, and $F(X_1,X_2,\cdots,X_m)$ be a reduced
polynomial in $\fp[X_1,X_2,\cdots,
X_m]$. Then the $p$-ary function
$$
f(x)=\Tr_1^n(\lambda x^2)
+F(\Tr_1^n(u x),\Tr_1^n(u^p x), \cdots,
\Tr_1^n(u^{p^{m-1}} x))
$$
is a weakly regular bent function.
In particular, when $\lambda
\in \fp^{\times}$ and
$F(X_1,X_2,\cdots,X_m)$ is a rotation symmetric polynomial, $f(x)$ is a
$p$-ary bent idempotent.
\end{corollary}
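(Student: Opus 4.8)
The plan is to separate the statement into its two assertions and dispatch them independently: the weak-regular-bentness is essentially immediate from the preceding corollary, while the idempotent claim is the only part that requires computation, and it reduces to checking $f(x^p)=f(x)$.

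For the first assertion I would simply invoke Corollary \ref{cor-A1}. By definition a normal element $u$ of $\mathbb{F}_{p^m}$ is one whose Frobenius orbit $u, u^p, \dots, u^{p^{m-1}}$ forms a basis of $\mathbb{F}_{p^m}$ over $\fp$. Setting $u_i = u^{p^{i-1}}$ for $i = 1, \dots, m$ in Corollary \ref{cor-A1}, with the same $\lambda$ and the same reduced polynomial $F$, yields at once that $f$ is weakly regular bent; no further work is needed here.

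For the idempotent claim I would assume $\lambda \in \fp^{\times}$ and $F$ rotation symmetric, then verify $f(x^p) = f(x)$ by treating the two summands of $f$ separately. For the quadratic part, since $\lambda^p = \lambda$ and the absolute trace is invariant under the Frobenius map $y \mapsto y^p$, we get $\Tr_1^n(\lambda x^{2p}) = \Tr_1^n((\lambda x^2)^p) = \Tr_1^n(\lambda x^2)$, so this term is unchanged. For the polynomial part, write $Y_j = \Tr_1^n(u^{p^{j-1}} x)$; under $x \mapsto x^p$ the $j$-th argument becomes $\Tr_1^n(u^{p^{j-1}} x^p)$, and applying the inverse Frobenius (again using trace-invariance, i.e. $\Tr_1^n(w) = \Tr_1^n(w^{p^{-1}})$) gives $\Tr_1^n(u^{p^{j-2}} x) = Y_{j-1}$, with the index read modulo $m$ because $u \in \mathbb{F}_{p^m}$ forces $u^{p^{-1}} = u^{p^{m-1}}$. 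Hence the argument tuple is cyclically shifted to $(Y_m, Y_1, \dots, Y_{m-1})$, and rotation symmetry of $F$ gives $F(Y_m, Y_1, \dots, Y_{m-1}) = F(Y_1, \dots, Y_m)$. Combining the two parts yields $f(x^p) = f(x)$, as required.

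The only delicate point is establishing the cyclic shift of the arguments: it rests on the Frobenius-invariance of $\Tr_1^n$ together with the $m$-periodicity $u^{p^m} = u$ of a normal element's conjugates, which is exactly what makes the index wrap $0 \mapsto m-1$ correctly, and on having precisely $m$ arguments matching that period. Once this permutation is pinned down, the constraint $\lambda \in \fp$ handles the quadratic term and rotation symmetry handles the rest, so I expect no genuine obstacle beyond bookkeeping the indices carefully.
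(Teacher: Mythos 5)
Your proposal is correct and follows essentially the same route as the paper's proof: weak regular bentness by specializing Corollary \ref{cor-A1} to the normal basis $u, u^p, \dots, u^{p^{m-1}}$, and the idempotent property by the direct computation $f(x^p)=f(x)$, using Frobenius-invariance of the trace to turn the argument tuple into a cyclic shift and then invoking rotation symmetry of $F$. No substantive differences.
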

\begin{proof}
From Corollary \ref{cor-A1},
$f(x)$ is weakly regular bent.
When $\lambda
\in \fp^{\times}$ and
$F(X_1,X_2,\cdots,X_m)$ is a rotation symmetric polynomial, we have
\begin{align*}
f(x^p)=& \Tr_1^n(\lambda x^{2p})
+F(\Tr_1^n(u x^p),\Tr_1^n(u^p x^p), \cdots,
\Tr_1^n(u^{p^{m-1}} x^p))\\
=&\Tr_1^n(\lambda x^2)
+F(\Tr_1^n(u^{p^{m-1}} x),\Tr_1^n(u^{p^0} x), \cdots,
\Tr_1^n(u^{p^{m-2}} x))\\
=& \Tr_1^n(\lambda x^2)
+F(\Tr_1^n(u x),\Tr_1^n(u^p x), \cdots,
\Tr_1^n(u^{p^{m-1}} x))\\
=&f(x).
\end{align*}
Hence, $f(x)$ is a $p$-ary bent idempotent.
\end{proof}

\begin{corollary}\label{cor-A3}
Let $u\in \fq^{\times}$,
$\Tr_1^n({1\over \lambda}u^2)=0$,  and $F(X)$ be a reduced
polynomial in $\fp[X]$. Then the $p$-ary function
$$
f(x)=\Tr_1^n(\lambda x^2)
+F(\Tr_1^n(u x))
$$
is a weakly regular bent function.
\end{corollary}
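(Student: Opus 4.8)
The plan is to obtain this as the special case $\tau=1$ of Theorem \ref{thm-A}. In that theorem the hypothesis on the defining elements is $\Tr_1^n(\frac{1}{\lambda}u_iu_j)=0$ for all $1\leq i\leq j\leq \tau$; when $\tau=1$ the only admissible pair is $i=j=1$, so this single requirement becomes exactly $\Tr_1^n(\frac{1}{\lambda}u^2)=0$ with $u_1=u$, which is precisely the hypothesis of the corollary. Likewise, the reduced polynomial $F(X_1,\ldots,X_\tau)$ degenerates to a univariate reduced polynomial $F(X)\in\fp[X]$, matching the statement. Since $u\in\fq^{\times}$ guarantees $u_1\neq 0$, the hypothesis $u_1,\ldots,u_\tau\in\fq^{\times}$ is satisfied.

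Concretely, I would set $\tau=1$, $u_1=u$, and invoke Theorem \ref{thm-A} verbatim. The function
$$
f(x)=\Tr_1^n(\lambda x^2)+F(\Tr_1^n(u x))
$$
is then of the prescribed form (\ref{p-bent}) with $g(x)=\Tr_1^n(\lambda x^2)$, and Theorem \ref{thm-A} asserts that it is weakly regular bent. There is essentially no obstacle beyond confirming that the single trace condition of the corollary coincides with the collapsed cross-condition of the underlying lemmas, and this is immediate.

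For robustness I would also note a self-contained route through Lemma \ref{lem}. For the square function the dual is $\widetilde{g}(x)=-\Tr_1^n(\frac{x^2}{4\lambda})$, so in the quadratic expansion (\ref{gdual}) the only nonzero coefficient is $a_0=-\frac{1}{4\lambda}$. With $\tau=1$ and $u_1=u$, the orthogonality quantity of Lemma \ref{lem} reduces to $\sum_{k=0}^{n-1}\Tr_1^n(a_k(u^{p^k}u+u u^{p^k}))=\Tr_1^n(-\frac{1}{4\lambda}\cdot 2u^2)=-\tfrac{1}{2}\Tr_1^n(\frac{u^2}{\lambda})$. Because $p$ is odd, the scalar $-\tfrac{1}{2}$ is a unit in $\fp$, so this expression vanishes if and only if $\Tr_1^n(\frac{u^2}{\lambda})=0$, which is exactly our assumption. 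The conclusion of Lemma \ref{lem} then yields the weakly regular bentness of $f$ directly, giving the corollary.
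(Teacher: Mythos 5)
Your proposal is correct and follows essentially the same route as the paper, which also derives Corollary \ref{cor-A3} by applying Theorem \ref{thm-A} with $\tau=1$ and $u_1=u$. The extra verification through Lemma \ref{lem} (computing $a_0=-\frac{1}{4\lambda}$ and checking that the orthogonality sum reduces to $-\frac{1}{2}\Tr_1^n(\frac{u^2}{\lambda})$) is a sound, slightly more explicit confirmation of the same fact.
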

\begin{proof}
From Theorem \ref{thm-A}, this corollary
follows.
\end{proof}

\subsection{New infinite family of $p$-ary
weakly regular bent functions from Kasami functions}

Let $n=2k$ and $g(x)=\Tr_1^n(ax^{p^k+1})$
be
the $p$-ary Kasami function.
Liu and Komo \cite{LK1992} proved that
$g(x)$ is  bent. Helleseth et al. \cite{HK2006}
showed that the Walsh transform of
$g(x)$ is
$$
\mathcal{W}_g(\beta)=-p^k\zeta_p^{
-\Tr_1^k({\beta^{p^k+1}\over a+a^{p^k}})}.
$$
And the dual of $g$ is
$\widetilde{g}(x)=-\Tr_1^k({x^{p^k+1}\over a+a^{p^k}})$.

\begin{theorem}\label{thm-B}
Let $n=2k$, $\tau$ be a positive integer,
$u_1,u_2,\cdots,u_\tau\in \fq^{\times}$,
and $F(X_1,X_2,\cdots,X_\tau)$ be a reduced
polynomial in $\fp[X_1,X_2,\cdots,
X_\tau]$, where
$\Tr_1^k({1\over a+a^{p^k}}(u_i^{p^k}
u_j+u_iu_j^{p^k}))
=0$ for any $1\leq i\leq j\leq \tau$.
Then the $p$-ary function
$$
f(x)=\Tr_1^n(a x^{p^k+1})
+F(\Tr_1^n(u_1 x),\Tr_1^n(u_2 x), \cdots,
\Tr_1^n(u_\tau x))
$$
is a weakly regular bent function.
\end{theorem}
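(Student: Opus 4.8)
The plan is to follow exactly the route used for Theorem \ref{thm-A}, replacing the square function by the Kasami function and invoking Lemma \ref{lem}. First I would set $g(x)=\Tr_1^n(ax^{p^k+1})$, which is weakly regular bent with $\mathcal{W}_g(\beta)=-p^k\zeta_p^{-\Tr_1^k(\beta^{p^k+1}/(a+a^{p^k}))}$; in the notation of Lemma \ref{lem} its dual is $\widetilde{g}(x)=-\Tr_1^k(x^{p^k+1}/(a+a^{p^k}))$ and the sign $\varepsilon=\pm 1$ is whatever makes $\varepsilon\sqrt{p^*}^n=-p^k$. The only point needing care is that this dual is written with $\Tr_1^k$ rather than $\Tr_1^n$, so it is not yet in the quadratic shape (\ref{gdual}) required by Lemma \ref{lem}.

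The key preliminary step is a trace-reduction identity. Since $n=2k$, for any $y\in\mathbb{F}_{p^k}$ one has $\Tr_1^n(y)=\Tr_1^k(\Tr_k^n(y))=\Tr_1^k(y+y^{p^k})=2\Tr_1^k(y)$, and as $p$ is odd, $2$ is invertible, so $\Tr_1^k(y)=\tfrac12\Tr_1^n(y)$. Now $x^{p^k+1}=x\cdot x^{p^k}$ is the relative norm, which lies in $\mathbb{F}_{p^k}$, and $a+a^{p^k}=\Tr_k^n(a)\in\mathbb{F}_{p^k}$, so $x^{p^k+1}/(a+a^{p^k})\in\mathbb{F}_{p^k}$. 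Applying the identity gives
$$
\widetilde{g}(x)=-\tfrac12\Tr_1^n\!\left(\frac{x^{p^k+1}}{a+a^{p^k}}\right)=\Tr_1^n(a_k x^{p^k+1}),\qquad a_k=-\frac{1}{2(a+a^{p^k})},
$$
with all remaining coefficients $a_s=0$ ($s\neq k$) and $b=0$, $c=0$. Thus $\widetilde{g}$ now has exactly the form demanded by Lemma \ref{lem}.

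Next I would translate the self-orthogonality hypothesis of Lemma \ref{lem} into the hypothesis of the theorem. With only $a_k$ nonzero, the condition $\sum_{s=0}^{n-1}\Tr_1^n(a_s(u_i^{p^s}u_j+u_iu_j^{p^s}))=0$ reduces to $\Tr_1^n(a_k(u_i^{p^k}u_j+u_iu_j^{p^k}))=0$. A short computation shows $(u_i^{p^k}u_j+u_iu_j^{p^k})^{p^k}=u_i^{p^{2k}}u_j^{p^k}+u_i^{p^k}u_j^{p^{2k}}=u_iu_j^{p^k}+u_i^{p^k}u_j$, so $u_i^{p^k}u_j+u_iu_j^{p^k}\in\mathbb{F}_{p^k}$; since also $a_k\in\mathbb{F}_{p^k}$, the whole product lies in $\mathbb{F}_{p^k}$ and the trace-reduction identity applies once more, giving
$$
\Tr_1^n(a_k(u_i^{p^k}u_j+u_iu_j^{p^k}))=-\Tr_1^k\!\left(\frac{u_i^{p^k}u_j+u_iu_j^{p^k}}{a+a^{p^k}}\right).
$$
This vanishes precisely under the theorem's hypothesis. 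Note that the case $i=j$ yields the diagonal condition $\Tr_1^k(u_i^{p^k+1}/(a+a^{p^k}))=0$, which kills the $w_i^2$ terms in the expansion of $\widetilde{g}(\beta+\sum_i w_iu_i)$, while $i<j$ kills the cross terms; both are covered by the stated range $1\le i\le j\le\tau$. With all of these conditions in force, Lemma \ref{lem} applies verbatim and yields $\mathcal{W}_f(\beta)=\varepsilon\sqrt{p^*}^n\zeta_p^{\widetilde{g}(\beta)+F(X_1,\dots,X_\tau)}$ with $X_i=\Tr_1^n(a_k(u_i\beta^{p^k}+u_i^{p^k}\beta))$, so $f$ is weakly regular bent.

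The main obstacle is bookkeeping rather than conceptual: one must check that every quantity fed into $\Tr_1^n$—the norm $x^{p^k+1}$, the element $a+a^{p^k}$, and the symmetric combinations $u_i^{p^k}u_j+u_iu_j^{p^k}$—lands in the subfield $\mathbb{F}_{p^k}$, so that the conversion $\Tr_1^n=2\Tr_1^k$ is legitimate and the mismatch between the $\Tr_1^k$ appearing in the Kasami dual and the $\Tr_1^n$ required by Lemma \ref{lem} disappears cleanly.
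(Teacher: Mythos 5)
Your proposal is correct and follows essentially the same route as the paper: both rewrite the Kasami dual $-\Tr_1^k\bigl(x^{p^k+1}/(a+a^{p^k})\bigr)$ as $\Tr_1^n(a_kx^{p^k+1})$ so that Lemma \ref{lem} applies, and then translate the self-orthogonality condition back into the theorem's hypothesis via the relative trace. The only cosmetic difference is that the paper picks an abstract $A$ with $\Tr_k^n(A)=1/(a+a^{p^k})$, whereas you make the concrete (and valid) choice $A=1/(2(a+a^{p^k}))$ and track the factor of $2$ explicitly.
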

\begin{proof}
The trace function $\Tr_k^n$ from
$\mathbb{F}_{p^n}$ to
$\mathbb{F}_{p^k}$ is surjective.
There exists $A\in \mathbb{F}_{p^n}$ such that $\Tr_k^n(A)={1\over a+a^{p^k}}
=A^{p^k}+A$. The dual $\widetilde{g}$ of
$g$ is $\widetilde{g}(x)
=-\Tr_1^k({1\over a+a^{p^k}}x^{p^k+1})
=-\Tr_1^n(Ax^{p^k+1})$. Then
\begin{align*}
Tr_1^n(A(u_i^{p^k}+u_iu_j^{p^k}))
=&  Tr_1^k((A+A^k)(u_i^{p^k}u_j+u_iu_j^{p^k})\\
=& Tr_1^k({1\over a+a^k} (u_i^{p^k}u_j+
u_iu_j^{p^k})\\
=& 0.
\end{align*}
From Lemma \ref{lem}, this theorem follows.
\end{proof}
\begin{corollary}\label{cor-B1}
Let $n=2k$, $\tau$ be a positive integer,
$\lambda,u_1,u_2,\cdots,u_\tau\in \mathbb{F}_{p^k}^{\times}$,
and $F(X_1,X_2,\cdots,X_\tau)$ be a reduced
polynomial in $\fp[X_1,X_2,\cdots,
X_\tau]$, where
$\Tr_1^k({1\over \lambda}u_i
u_j)
=0$ for any $1\leq i\leq j\leq \tau$.
Then the $p$-ary function
$$
f(x)=\Tr_1^k(\lambda x^{p^k+1})
+F(\Tr_1^n(u_1 x),\Tr_1^n(u_2 x), \cdots,
\Tr_1^n(u_\tau x))
$$
is a weakly regular bent function.
\end{corollary}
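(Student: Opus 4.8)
The plan is to recognize the base function $\Tr_1^k(\lambda x^{p^k+1})$ as a Kasami function in disguise, so that Corollary \ref{cor-B1} becomes a direct specialization of Theorem \ref{thm-B}. First I would observe that for $x\in \mathbb{F}_{p^n}$ with $n=2k$ the quantity $x^{p^k+1}=x\cdot x^{p^k}$ is the relative norm of $x$ into $\mathbb{F}_{p^k}$, hence lies in $\mathbb{F}_{p^k}$ and is fixed by the $p^k$-th power map. Using the transitivity $\Tr_1^n=\Tr_1^k\circ \Tr_k^n$ together with $\Tr_k^n(y)=y+y^{p^k}$ (valid since $n=2k$), I would compute, for any $a\in \mathbb{F}_{p^n}$,
$$
\Tr_1^n(a x^{p^k+1})=\Tr_1^k\big((a+a^{p^k})\,x^{p^k+1}\big),
$$
where I have used $(x^{p^k+1})^{p^k}=x^{p^k+1}$.

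Next, since $\Tr_k^n:\mathbb{F}_{p^n}\to \mathbb{F}_{p^k}$ is surjective, I would pick $a\in \mathbb{F}_{p^n}$ with $a+a^{p^k}=\Tr_k^n(a)=\lambda$; this is possible because $\lambda\in \mathbb{F}_{p^k}^{\times}$, and in particular $a+a^{p^k}=\lambda\neq 0$, so the Kasami function $g(x)=\Tr_1^n(a x^{p^k+1})$ is weakly regular bent with the Walsh transform and dual $\widetilde{g}(x)=-\Tr_1^k(x^{p^k+1}/(a+a^{p^k}))$ recorded before Theorem \ref{thm-B}. With this choice the displayed identity gives $\Tr_1^k(\lambda x^{p^k+1})=\Tr_1^n(a x^{p^k+1})$, so the $p$-ary function $f$ of Corollary \ref{cor-B1} is literally the function treated in Theorem \ref{thm-B} for this $a$. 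Note also that $u_i\in\mathbb{F}_{p^k}^{\times}\subset\mathbb{F}_{p^n}^{\times}$, so the $u_i$ meet the hypotheses of Theorem \ref{thm-B}.

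Finally I would check that the condition of Corollary \ref{cor-B1} matches that of Theorem \ref{thm-B}. Because $u_i,u_j\in \mathbb{F}_{p^k}^{\times}$, they satisfy $u_i^{p^k}=u_i$ and $u_j^{p^k}=u_j$, so $u_i^{p^k}u_j+u_iu_j^{p^k}=2u_iu_j$; hence
$$
\Tr_1^k\Big(\tfrac{1}{a+a^{p^k}}\big(u_i^{p^k}u_j+u_iu_j^{p^k}\big)\Big)=2\,\Tr_1^k\Big(\tfrac{1}{\lambda}u_iu_j\Big),
$$
and since $p$ is odd the factor $2$ is a unit modulo $p$, so this vanishes exactly when $\Tr_1^k(\tfrac{1}{\lambda}u_iu_j)=0$, the assumed hypothesis. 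Applying Theorem \ref{thm-B} then yields the claim. There is no serious obstacle here; the only point demanding care is the selection of $a$, namely invoking surjectivity of $\Tr_k^n$ and observing that $\lambda\neq 0$ forces $a+a^{p^k}\neq 0$, which is precisely what makes $\widetilde{g}$ (and hence the Kasami bentness) well defined.
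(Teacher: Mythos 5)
Your proposal is correct and follows essentially the same route as the paper: choose $a\in\mathbb{F}_{p^n}$ with $\Tr_k^n(a)=\lambda$ so that $\Tr_1^k(\lambda x^{p^k+1})=\Tr_1^n(ax^{p^k+1})$, reduce the orthogonality condition of Theorem \ref{thm-B} to $2\Tr_1^k(\tfrac{1}{\lambda}u_iu_j)=0$ using $u_i^{p^k}=u_i$, and invoke Theorem \ref{thm-B}. Your version merely spells out the trace computation and the nonvanishing of $a+a^{p^k}$ in more detail than the paper does.
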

\begin{proof}
Take $a\in \mathbb{F}_{p^n}$ such that
$\Tr_k^n(a)=\lambda$. Then
$\Tr_1^n(ax^{p^k+1})=\Tr_1^k(\lambda x^{p^k+1})$ and
\begin{align*}
\Tr_1^k({1\over \lambda} (u_i^{p^k}u_j
+u_iu_j^{p^k})=\Tr_1^k({2\over \lambda}
u_iu_j)=2\Tr_1^k({1\over \lambda} u_iu_j)
=0.
\end{align*}
From Theorem \ref{thm-B}, this corollary follows.
\end{proof}

\begin{corollary}\label{cor-B2}
Let $n=2k$, $\tau$ be a positive integer,
$\lambda\in \mathbb{F}_{p^k}^{\times}$,
$u\in \fq^{\times}$,
and $F(X)$ be a reduced
polynomial in $\fp[X]$, where
$\Tr_1^k({1\over \lambda}u^{p^k+1})
=0$.
Then the $p$-ary function
$$
f(x)=\Tr_1^k(\lambda x^{p^k+1})
+F(\Tr_1^n(u x))
$$
is a weakly regular bent function.
\end{corollary}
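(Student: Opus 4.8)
The plan is to derive this corollary as the $\tau=1$ instance of Theorem~\ref{thm-B}, keeping in mind that the single element $u$ is permitted to range over the whole of $\fq^{\times}$, not merely the subfield $\mathbb{F}_{p^k}^{\times}$. First I would realize $\Tr_1^k(\lambda x^{p^k+1})$ as a genuine Kasami function $\Tr_1^n(a x^{p^k+1})$. Since the trace $\Tr_k^n$ is surjective, I can pick $a\in\fq$ with $\Tr_k^n(a)=a+a^{p^k}=\lambda$. Using transitivity $\Tr_1^n=\Tr_1^k\circ\Tr_k^n$ together with $x^{p^{2k}}=x$ for $x\in\fq$, one finds $\Tr_k^n(a x^{p^k+1})=(a+a^{p^k})x^{p^k+1}=\lambda x^{p^k+1}$, hence $\Tr_1^n(a x^{p^k+1})=\Tr_1^k(\lambda x^{p^k+1})$, and moreover $\tfrac{1}{a+a^{p^k}}=\tfrac1\lambda$.

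With this identification I would invoke Theorem~\ref{thm-B} with $\tau=1$ and $u_1=u$. The only hypothesis that must be verified is the self-orthogonality condition of that theorem, which for the single vector $u$ (i.e.\ $i=j=1$) reads
$$
\Tr_1^k\!\left(\frac{1}{a+a^{p^k}}\bigl(u^{p^k}u+u u^{p^k}\bigr)\right)
=\Tr_1^k\!\left(\frac{2}{\lambda}\,u^{p^k+1}\right)
=2\,\Tr_1^k\!\left(\frac{1}{\lambda}\,u^{p^k+1}\right).
$$
Since $p$ is odd, $2$ is a unit in $\fp$, so this vanishes precisely when $\Tr_1^k(\tfrac1\lambda u^{p^k+1})=0$, which is exactly the stated hypothesis. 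I note that $u^{p^k+1}=u\cdot u^{p^k}$ is the relative norm of $u$ from $\fq$ down to $\mathbb{F}_{p^k}$, so $\tfrac1\lambda u^{p^k+1}\in\mathbb{F}_{p^k}$ and the inner trace $\Tr_1^k$ is well defined even though $u$ itself lies in the larger field.

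Once this single relation is checked, Theorem~\ref{thm-B} directly yields that $f(x)=\Tr_1^k(\lambda x^{p^k+1})+F(\Tr_1^n(ux))$ is weakly regular bent, which finishes the argument. I do not anticipate any genuine obstacle here: the entire content is a specialization. The one point deserving care---and the reason this appears as a separate corollary rather than as an instance of Corollary~\ref{cor-B1}---is that $u$ need not lie in $\mathbb{F}_{p^k}$; the correct condition is therefore phrased through the norm $u^{p^k+1}$, and the collapse of the cross term to $2u^{p^k+1}$ followed by cancellation of the factor $2$ relies essentially on $p$ being odd.
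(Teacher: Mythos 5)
Your proof is correct and follows essentially the same route as the paper: both specialize Theorem~\ref{thm-B} by choosing $a\in\fq$ with $\Tr_k^n(a)=\lambda$, and your explicit check that the single self-orthogonality condition collapses to $2\,\Tr_1^k(\tfrac1\lambda u^{p^k+1})=0$ simply fills in details the paper leaves implicit.
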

\begin{proof}
From Theorem \ref{thm-B}, take
$a\in \fq$ such that $\Tr_k^n(a)=\lambda$.
Then this corollary follows.
\end{proof}

\begin{corollary}\label{cor-B3}
Let $n=2k=2pm$,
$\lambda\in \mathbb{F}_{p^m}^{\times}$, $u$  be a normal element of  $\mathbb{F}_{p^m}^{\times}$,
and $F(X_1,X_2,\cdots,X_m)$ be a reduced
polynomial in $\fp[X_1,X_2,\cdots,
X_m]$.
Then the $p$-ary function
$$
f(x)=\Tr_1^k(\lambda x^{p^k+1})
+F(\Tr_1^n(u x),\Tr_1^n(u^p x), \cdots,
\Tr_1^n(u^{p^{m-1}} x))
$$
is a weakly regular bent function.
Further, when $\lambda\in \fp^{\times}$ and
$F(X_1,X_2,\cdots,X_m)$ is rotation symmetric, then $f(x)$ is a bent idempotent.
\end{corollary}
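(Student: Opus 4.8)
The plan is to obtain the first assertion as a direct specialization of Corollary \ref{cor-B1}, taking the normal basis $u, u^p, \ldots, u^{p^{m-1}}$ of $\mathbb{F}_{p^m}$ over $\fp$ as the $u_i$, and then to establish the idempotent property by the same direct $f(x^p) = f(x)$ computation used in Corollary \ref{cor-A2}.

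First I would set $\tau = m$ and $u_i = u^{p^{i-1}}$ for $1 \le i \le m$. Since $m \mid k = pm$, we have $\mathbb{F}_{p^m} \subseteq \mathbb{F}_{p^k}$, so each $u_i$ and $\lambda$ lie in $\mathbb{F}_{p^k}^\times$, as required by Corollary \ref{cor-B1}. The only hypothesis left to verify is the orthogonality condition $\Tr_1^k(\frac{1}{\lambda} u_i u_j) = 0$ for $1 \le i \le j \le m$. Here the key point is that $\frac{1}{\lambda} u_i u_j = \frac{1}{\lambda} u^{p^{i-1}} u^{p^{j-1}}$ already lies in $\mathbb{F}_{p^m}$; using the trace tower $\fp \subseteq \mathbb{F}_{p^m} \subseteq \mathbb{F}_{p^k}$ with $[\mathbb{F}_{p^k} : \mathbb{F}_{p^m}] = k/m = p$, any $z \in \mathbb{F}_{p^m}$ satisfies $\Tr_m^k(z) = (k/m) z = p z = 0$ in characteristic $p$, whence $\Tr_1^k(z) = \Tr_1^m(\Tr_m^k(z)) = 0$. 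This is exactly the collapse already used in Corollary \ref{cor-A1}, so the orthogonality condition holds automatically and Corollary \ref{cor-B1} yields that $f$ is weakly regular bent.

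For the idempotent claim I would assume $\lambda \in \fp^\times$ and $F$ rotation symmetric, and compute $f(x^p)$ term by term. For the Kasami part, $\lambda^p = \lambda$ and $\lambda x^{p^k+1} \in \mathbb{F}_{p^k}$, so Frobenius-invariance of $\Tr_1^k$ gives $\Tr_1^k(\lambda x^{p(p^k+1)}) = \Tr_1^k\big((\lambda x^{p^k+1})^p\big) = \Tr_1^k(\lambda x^{p^k+1})$. For each argument of $F$, applying the inverse Frobenius inside $\Tr_1^n$ and using $x^{p^n} = x$, $u^{p^m} = u$, and $n = 2pm \equiv 0 \pmod m$ turns $\Tr_1^n(u^{p^{i-1}} x^p)$ into $\Tr_1^n(u^{p^{(i-2) \bmod m}} x)$. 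Thus the argument tuple of $F$ in $f(x^p)$ is the cyclic shift $(X_m, X_1, \ldots, X_{m-1})$ of $(X_1, \ldots, X_m)$, where $X_i = \Tr_1^n(u^{p^{i-1}} x)$, and rotation symmetry of $F$ gives $F(X_m, X_1, \ldots, X_{m-1}) = F(X_1, \ldots, X_m)$; combining the two parts yields $f(x^p) = f(x)$.

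The computations are all routine once the framework is set, so there is no genuinely hard analytic step; the things I expect to require the most care are purely structural. The orthogonality condition is never checked by hand but vanishes because the intermediate trace $\Tr_m^k$ acts as multiplication by $p = k/m \equiv 0$, so the whole argument hinges on the specific tower $k = pm$. Likewise, in the idempotent step one must confirm that $m \mid n$ makes the Frobenius action on the exponents of $u$ an honest cyclic rotation of the $m$ arguments, which is what lets rotation symmetry of $F$ be applied.
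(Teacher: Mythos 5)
Your proposal is correct and follows essentially the same route as the paper: verify the orthogonality hypothesis of Corollary \ref{cor-B1} by noting that $\frac{1}{\lambda}u^{p^{i-1}}u^{p^{j-1}}\in\mathbb{F}_{p^m}$ and that $\Tr_m^k$ acts on $\mathbb{F}_{p^m}$ as multiplication by $k/m=p\equiv 0$, then check $f(x^p)=f(x)$ directly via Frobenius-invariance of the trace and rotation symmetry of $F$. You simply spell out the idempotent computation that the paper leaves implicit (it mirrors the one in Corollary \ref{cor-A2}).
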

\begin{proof}
Note that $\Tr_1^k({1\over \lambda}
u^{p^i}u^{p^j})=\Tr_1^m(p{1\over \lambda}
u^{p^i}u^{p^j})=0$. From Corollary
\ref{cor-B1}, $f(x)$ is weakly regular bent. When $\lambda\in \fp^{\times}$ and
$F(X_1,X_2,\cdots,X_m)$ is rotation symmetric, then $f(x^p)=f(x)$. Hence,
$f(x)$ is a $p$-ary bent idempotent.
\end{proof}

\subsection{New infinite family of $p$-ary
weakly regular bent functions from the
Maiorana-McFarland class}

In this subsection, we identify
$\mathbb{F}_{p^n}~(n=2k)$ as
$\mathbb{F}_{p^k}\times \mathbb{F}_{p^k}$ and consider $p$-ary functions with bivariate representation
$f(x,y)=\Tr_1^k(P(x,y))$, where
$P(x,y)$ is a polynomial in two-variable
over $\mathbb{F}_{p^k}$.
For $(\beta_1,\beta_2),
(x,y)\in \mathbb{F}_{p^k}\times \mathbb{F}_{p^k}$, the scalar product in $\mathbb{F}_{p^k}\times \mathbb{F}_{p^k}$ can be defined as
$$
\langle (\beta_1,\beta_2),
(x,y) \rangle =\Tr_1^k(\beta_1 x+\beta_2 y).
$$
The well-known Maiorana-McFarland class of
$p$-ary bent functions can be defined as follows:
\begin{equation}\label{MM-g}
g(x,y)=\Tr_1^k(x\pi(y))+h(y), ~
(x,y)\in \mathbb{F}_{p^k}\times \mathbb{F}_{p^k},
\end{equation}
where $\pi: \mathbb{F}_{p^k}
\rightarrow \mathbb{F}_{p^k}$ is a
permutation and $h$ is any
$p$-ary function over $\mathbb{F}_{p^k}$.
From \cite{C2010}, the dual $\widetilde{g}$ is
\begin{equation}\label{MM-dual}
\widetilde{g}(x,y)=\Tr_1^k(y\pi^{-1}(x))
+h(\pi^{-1}(x)),
\end{equation}
where $\pi^{-1}$ denotes the inverse
mapping of $\pi$.

By choosing suitable permutation
$\pi$, we will construct a new infinite family of $p$-ary weakly regular bent functions.
\begin{theorem}\label{thm-C}
Let $n=2k$, $\pi$ be a linearized  permutation polynomial over $\mathbb{F}_{p^k}$, and
$u_1=(u_1^{(1)},u_1^{(2)}),
\cdots, u_\tau=(u_\tau^{(1)},u_\tau^{(2)})
\in \mathbb{F}_{p^k}\times
\mathbb{F}_{p^k}$ such that
$$
\Tr_1^k(u_i^{(2)}\pi^{-1}(u_{j}^{(1)})
+u_j^{(2)}\pi^{-1}(u_{i}^{(1)}))=0~, 1\leq i\leq j\leq \tau.
$$
Then the $p$-ary function
$$
f(x,y)=\Tr_1^k(x\pi(y))+
\Tr_1^k(by)+F(\Tr_1^k(u_1^{(1)}x+
u_1^{(2)}y),\cdots, \Tr_1^k(u_\tau^{(1)}x+
u_\tau^{(2)}y))
$$
is weakly regular bent.
\end{theorem}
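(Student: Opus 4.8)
The plan is to run the bivariate analogue of Lemma~\ref{lem}, the key point being that for a \emph{linearized} permutation the dual of the Maiorana--McFarland function is genuinely quadratic. First I would note that $g(x,y)=\Tr_1^k(x\pi(y))+\Tr_1^k(by)$ is the member of the class (\ref{MM-g}) with $h(y)=\Tr_1^k(by)$, so by (\ref{MM-dual}) its dual is
$$
\widetilde{g}(x,y)=\Tr_1^k(y\pi^{-1}(x))+\Tr_1^k(b\pi^{-1}(x)),
$$
and $g$ is weakly regular bent with $\mathcal{W}_g(\beta)=\varepsilon\sqrt{p^*}^n\zeta_p^{\widetilde{g}(\beta)}$. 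Since $\pi$ is linearized, $\pi^{-1}$ is $\fp$-linear, so $\widetilde{g}$ splits as $Q+L$ with quadratic part $Q(x,y)=\Tr_1^k(y\pi^{-1}(x))$ and linear part $L(x,y)=\Tr_1^k(b\pi^{-1}(x))$.

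Next I would compute the symmetric bilinear form attached to $Q$. Using additivity of $\pi^{-1}$ one obtains
$$
B\big((x_1,y_1),(x_2,y_2)\big)=\Tr_1^k\big(y_1\pi^{-1}(x_2)+y_2\pi^{-1}(x_1)\big).
$$
The purpose of this step is to recognize that the hypothesis $\Tr_1^k(u_i^{(2)}\pi^{-1}(u_j^{(1)})+u_j^{(2)}\pi^{-1}(u_i^{(1)}))=0$ for $1\le i\le j\le\tau$ is precisely $B(u_i,u_j)=0$ for all pairs; the case $i=j$, together with $p$ being odd, also gives $Q(u_i)=0$. Hence $u_1,\dots,u_\tau$ span a totally isotropic subspace for $\widetilde{g}$, which is the bivariate form of the self-orthogonality exploited in Lemma~\ref{lem}.

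I would then expand $\widetilde{g}(\beta+\sum_{i}w_iu_i)$ for $\beta=(\beta_1,\beta_2)$. By bilinearity of $B$, linearity of $L$, and the vanishing of all $B(u_i,u_j)$ and $Q(u_i)$, every term quadratic in $\mathbf{w}$ disappears, leaving
$$
\widetilde{g}\Big(\beta+\sum_{i=1}^\tau w_iu_i\Big)=\widetilde{g}(\beta)+\sum_{i=1}^\tau w_iX_i,\qquad X_i=B(\beta,u_i)+L(u_i),
$$
with each $X_i$ independent of $\mathbf{w}$. Substituting into the general Walsh identity (\ref{walshf}) and applying the Fourier expansion (\ref{fourier}) collapses the sum over $\mathbf{w}$ to $\zeta_p^{F(X_1,\dots,X_\tau)}$, so that $\mathcal{W}_f(\beta)=\varepsilon\sqrt{p^*}^n\zeta_p^{\widetilde{g}(\beta)+F(X_1,\dots,X_\tau)}$. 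This has modulus $p^{n/2}$, whence $f$ is weakly regular bent.

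The main obstacle, and the reason the hypothesis on $\pi$ reads ``linearized'' rather than merely ``permutation,'' is the decomposition $\widetilde{g}=Q+L$ with $Q$ quadratic: only then does $\widetilde{g}(\beta+v)$ expand into its value at $\beta$, a bilinear cross term, and a term in $v$ alone, so that self-orthogonality of the $u_i$ suffices to annihilate the $\mathbf{w}$-quadratic contributions. For a general permutation $\pi^{-1}$ would not be additive, $\Tr_1^k(y\pi^{-1}(x))$ and $h(\pi^{-1}(x))$ would carry higher-degree terms, the linearization would break down, and Lemma~\ref{lem} would not apply. Everything else is the bivariate transcription of the computation already performed for Lemma~\ref{lem}.
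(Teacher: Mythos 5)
Your proposal is correct and follows essentially the same route as the paper: compute the dual from (\ref{MM-dual}), use the $\fp$-linearity of $\pi^{-1}$ to expand $\widetilde{g}(\beta+\sum_i w_iu_i)$, observe that the hypothesis kills exactly the terms quadratic in $\mathbf{w}$ (with $p$ odd handling the $i=j$ case), and conclude via (\ref{walshf}) and (\ref{fourier}). Your packaging of the computation through the symmetric bilinear form $B$ is only a cosmetic reorganization of the paper's direct expansion, with the same $X_i=\Tr_1^k((\beta_2+b)\pi^{-1}(u_i^{(1)})+u_i^{(2)}\pi^{-1}(\beta_1))$.
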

\begin{proof}
Take $h(y)=\Tr_1^k(by)$ and
$g(x)=\Tr_1^k(x\pi(y))+\Tr_1^k(by)$. Then
from Equation (\ref{MM-dual}),
\begin{align*}
\widetilde{g}(\beta_1+\sum_{i=1}^{\tau}
w_iu_i^{(1)}, \beta_2 + \sum_{i=1}^{\tau}
w_iu_i^{(2)})=&\widetilde{g}(\beta_1,\beta_2)
+\sum_{i=1}^{\tau} w_iX_i\\
&+\sum_{i=1}^{\tau} w_i^2 \Tr_1^k(u_i^{(2)}
\pi^{-1}u_i^{(1)})+
\sum_{1\leq i< j\leq \tau}w_iw_j\Tr_1^k
(u_i^{(2)}\pi^{-1}(u_{j}^{(1)})
+u_j^{(2)}\pi^{-1}(u_{i}^{(1)})),
\end{align*}
where $X_i=
\Tr_1^k((\beta_2+b)\pi^{-1}(u_{i}^{(1)})
+u_i^{(2)}\pi^{-1}(\beta_1))$.
Since for any $1\leq i\leq j\leq \tau$
$\Tr_1^k(u_i^{(2)}\pi^{-1}(u_{j}^{(1)})
+u_j^{(2)}\pi^{-1}(u_{i}^{(1)}))=0$, from
Equation (\ref{walshf}),
$$
\mathcal{W}_f(\beta_1,\beta_2)
=\varepsilon \sqrt{p^*}^{2k}
\zeta_p^{\widetilde{g}(\beta_1,\beta_2)} \sum_{\mathbf{w}\in \mathbb{F}_p^{\tau}}
c_{\mathbf{w}}\zeta_p^{\sum_{i=1}^{
\tau}w_iX_i}.
$$
From Equation (\ref{fourier}),
$$
\mathcal{W}_f(\beta_1,\beta_2)
=\varepsilon \sqrt{p^*}^{2k}
\zeta_p^{\widetilde{g}(\beta_1,\beta_2)
+F(X_1,\cdots,X_\tau)}.
$$
Hence, $f(x,y)$ is a $p$-ary weakly regular bent function.
\end{proof}

\section{Conclusion}
In this paper, we generalize the work of Mesnager \cite{M2014} and   Xu et al. \cite{XCX2015,XC2015} to $p$-ary weakly regular bent functions. From known weakly regualr bent functions (square functions, Kasami functions, and the Maiorana-McFarland class of bent functions), we construct three new infinite families of $p$-ary weakly regular bent functions, which  contain
some infinite families of $p$-ary
bent idempotents.

\section*{Acknowledgment}
This work was supported by
the National Natural Science Foundation of China
(Grant No. 11401480, No.10990011 \& No. 61272499).
Yanfeng Qi also acknowledges support from
KSY075614050 of Hangzhou Dianzi University.


\ifCLASSOPTIONcaptionsoff
  \newpage
\fi

\end{document}